\documentclass[twoside,a4paper]{article}
\usepackage{amsmath,graphicx,amssymb,fancyhdr,amsthm,enumerate,textcomp} 
\usepackage[usenames]{color}
\newtheorem{thm}{Theorem}[section]

\newtheorem{prop}[thm]{Proposition} 
 
\theoremstyle{definition} 
\newtheorem{defn}[thm]{Definition}
  
\theoremstyle{remark}  
\newtheorem{rem}[thm]{Remark}  
\def\beq{\begin{eqnarray}}  
\def\eeq{\end{eqnarray}}  
\def\bsp{\begin{split}}  
\def\esp{\end{split}}

\def\Tr{\mathrm{Tr}}  
\def\d{\mathrm{d}}

\newcommand{\mbold}[1]{\mbox{\boldmath{\ensuremath{#1}}}}

\begin{document}  
  
\title{\Large\textbf{Pseudo-Riemannian VSI spaces }}  
\author{{\large\textbf{Sigbj\o rn Hervik$^\text{\tiny\textleaf}$ and Alan Coley$^\heartsuit$} }    
 \vspace{0.3cm} \\    
$^\text{\tiny\textleaf}$Faculty of Science and Technology,\\    
 University of Stavanger,\\  N-4036 Stavanger, Norway     
\vspace{0.2cm} \\ 
$^\heartsuit$Department of Mathematics and Statistics,\\ 
Dalhousie University, \\ 
Halifax, Nova Scotia, Canada B3H 3J5 
\vspace{0.3cm} \\     
\texttt{sigbjorn.hervik@uis.no, aac@mathstat.dal.ca} }    
\date{\today}    
\maketitle  
\pagestyle{fancy}  
\fancyhead{} 
\fancyhead[EC]{S. Hervik \& A. Coley}  
\fancyhead[EL,OR]{\thepage}  
\fancyhead[OC]{Pseudo-Riemannian VSI metrics}  
\fancyfoot{} 
  
\begin{abstract} 
In this paper we consider pseudo-Riemannian spaces of
 arbitrary signature for which all of their polynomial curvature
 invariants vanish (VSI spaces).  We discuss an algebraic classification
 of pseudo-Riemannian spaces in terms of the boost weight decomposition
 and define the ${\bf S}_i$- and ${\bf N}$-properties, and show that if the
 curvature tensors of the space possess the ${\bf N}$-property then it is
 a VSI space.  We then use this result to construct a set of metrics that are
 VSI.  All of the VSI spaces constructed possess a geodesic, expansion-free,
 shear-free, and twist-free null-congruence.  We also discuss the related
 Walker metrics.  
 \end{abstract}

\section{Introduction}

Let us consider an arbitrary-dimensional pseudo-Riemannian space of signature $(k,k+m)$.
We will investigate when such a space can have a degenerate curvature structure;
in particular, we shall determine when all of its polynomial curvature invariants  vanish (VSI space).
Previously, the VSI spaces for Lorentzian metrics have been studied \cite{VSI} and it was shown that
these comprise a subclass of the degenerate Kundt metrics \cite{degen}.  Here, we will see that 
Kundt-like metrics also play a similar
role for pseudo-Riemannian VSI metrics of arbitrary signature.
In order to study such VSI metrics we will utilise curvature operators since any curvature
invariant is the trace of some curvature operator \cite{OP}.  We will 
use this fact to construct a set
of metrics with vanishing curvature invariants. All of the metrics constructed possess a
null-vector that is geodesic, shear-free, vorticity-free and expansion-free.

Let us first review the boost weight classification,
originally used to study degenerate metrics in Lorentzian geometry \cite{class},
in the pseudo-Riemannian case \cite{bw}.  The symmetry group of frame-rotations in
this case is $SO(k,k+m)$.  Any element $G$, can be
written $G=KAN$, where we have split it up into a compact spin piece, $K$, an Abelian boost piece,
$A$, and a piece consisting of null-rotations, $N$.  For $SO(k,k+m)$, $K\in SO(m)$, and there
are $k$-independent boosts (= the real rank of $SO(k,k+m)$).

Therefore, we first introduce a suitable null-frame such that the metric can be written:
\beq
\d s^2=2\left({\mbold\ell}^1{\mbold n}^1+\dots+{\mbold\ell}^I{\mbold n}^I+\dots+{\mbold\ell}^{k}{\mbold n}^{k}\right)+\delta_{ij}{\mbold m}^i{\mbold m}^j,
\eeq
where the indices $i=1,\dots, m$. The spins will act on ${\mbold m}^i$,  each boost will act on the pair of null-vectors, while the null-rotations will in general mix up null-vectors and spatial vectors. More precisely, 

\beq
\text{Spins:} && \tilde{\mbold\ell}^I={\mbold\ell}^I, ~\tilde{\mbold n}^I={\mbold n}^I, ~\tilde{\mbold m}^i=M^i_{~j}{\mbold m}^j, ~ (M^i_{~j})\in SO(m), \\
\text{Boosts:} && \tilde{\mbold\ell}^I=e^{\lambda_I}{\mbold\ell}^I, ~\tilde{\mbold n}^I=e^{-\lambda_I}{\mbold n}^I, ~\tilde{\mbold m}^i={\mbold m}^j, 
\eeq
while the null-rotations can be split up at each level. 
Considering the subset of forms $({\mbold\ell}^I,{\mbold n}^I,{\mbold\omega}^{\mu_I})$, 
where ${\mbold\omega}^{\mu_I}=\{{\mbold\ell}^{I+1},{\mbold n}^{I+1},\cdots,{\mbold\ell}^{k},{\mbold n}^{k},{\mbold m}^{i}\}$, then we can consider the $I$-th level null-rotations with respect to ${\mbold n}^I$:
\beq
\text{Null-Rot:} ~~ \tilde{\mbold\ell}^I={\mbold\ell}^I-z_{\mu_I}{\mbold\omega}^{\mu_I}-\frac 12z_{\mu_I}z^{\mu_I}{\mbold n}^I, ~\tilde{\mbold n}^I={\mbold n}^I, ~\tilde{\mbold\omega}^{\mu_I}={\mbold\omega}^{\mu_I}+z^{\mu_I}{\mbold n}^I, 
\eeq
and similarly for ${\mbold \ell}^I$. Note that there are $2(2k+m-2I)$ null-rotations at the $I$th level, 
making $2k(k+m-1)$ in total.  

\subsection{Boost weight decomposition}

Let us consider the $k$ independent boosts:
\beq
({\mbold \ell}^1,{\mbold n}^1)&\mapsto& (e^{\lambda_1}{\mbold\ell}^1,e^{-\lambda_1}{\mbold n}^1)\nonumber\\
({\mbold \ell}^2,{\mbold n}^2)&\mapsto& (e^{\lambda_2}{\mbold\ell}^2,e^{-\lambda_2}{\mbold n}^2)\nonumber\\
& \vdots & \nonumber\\
({\mbold\ell}^{k},{\mbold n}^{k})&\mapsto& (e^{\lambda_k}{\mbold\ell}^{k},e^{-\lambda_k}{\mbold n}^{k}).
\eeq
For a tensor $T$, we can then consider the boost weight of the components of this tensor, ${\bf b}\in \mathbb{Z}^k$, as follows. If the component $T_{\mu_1...\mu_n}$ transforms as:
\[
T_{\mu_1...\mu_n}\mapsto e^{-(b_1\lambda_1+b_2\lambda_2+...+b_k\lambda_k)}T_{\mu_1...\mu_n},
\]
then we will say the component $T_{\mu_1...\mu_n}$ is 
of boost weight ${\bf b}\equiv (b_1,b_2,...,b_k)$. We can now decompose a tensor 
into boost weights; in particular, 
\[ T=\sum_{{\bf b}\in  \mathbb{Z}^k}(T)_{\bf b},\] 
where $(T)_{\bf b}$ means the projection onto the components of boost weight ${\bf b}$. 

By considering tensor products, the boost weights obey the following additive rule: 
\beq
(T \otimes S)_{{\bf b}}=\sum_{\tilde{\bf b}+\hat{\bf b}={\bf b}}(T)_{\tilde{\bf b}}\otimes (S)_{\hat{\bf b}}.
\eeq

\subsection{The ${\bf S}_i$- and ${\bf N}$-properties}

Let us consider a tensor, $T$, and list a few conditions that the tensor components may fulfill:
\begin{defn} \label{cond}We define the following conditions:
\begin{enumerate}
\item[B1)]{} $(T)_{\bf b}=0$, for ${\bf b}=(b_1,b_2,b_3,...,b_k)$, $b_1>0$. 
\item[B2)]{} $(T)_{\bf b}=0$, for ${\bf b}=(0,b_2,b_3,...,b_k)$, $b_2>0$. 
\item[B3)]{} $(T)_{\bf b}=0$, for ${\bf b}=(0,0,b_3,...,b_k)$, $b_3>0$.
\item[$\vdots$]{} 
\item[B$k$)]{}  $(T)_{\bf b}=0$, for ${\bf b}=(0,0,...,0,b_k)$, $b_k>0$.
\end{enumerate}
\end{defn}

\begin{defn}
We will say that a tensor $T$ possesses the ${\bf S}_1$-property if and only if there exists a null frame such that condition B1) above is satisfied. Furthermore, we say that $T$ possesses the ${\bf S}_i$-property if and only if there exists a null frame such that conditions B1)-B$i$) above are satisfied.
\end{defn}
\begin{defn}
We will say that a tensor $T$ possesses the ${\bf N}$-property if and only if there exists a null frame such that conditions B1)-B$k$) in definition \ref{cond} are satisfied, \emph{and} 
\[ (T)_{\bf b}=0, \text{ for }  {\bf b}=(0,0,...,0,0).\] 

\end{defn}

\begin{prop}
For tensor products we have:
\begin{enumerate}
\item{}
Let $T$ and $S$ possess the ${\bf S}_i$- and ${\bf S}_j$-property, respectively. Assuming, 
with no loss of generality, that $i\leq j$, then $T\otimes S$ possesses the ${\bf S}_{i}$-property.
\item{} Let  $T$ and $S$ possess the ${\bf S}_i$- and ${\bf N}$-property, respectively. Then  $T\otimes S$  possesses the ${\bf S}_i$-property. If $i=k$, then   $T\otimes S$ possesses the ${\bf N}$-property.
\item{}  Let  $T$ and $S$ both possess the ${\bf N}$-property. Then  $T\otimes S$, and any contraction thereof, possesses the ${\bf N}$-property.
\end{enumerate}
\end{prop}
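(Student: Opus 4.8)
The plan is to recast conditions B1)--B$i$) and the ${\bf N}$-property in terms of the lexicographic order on $\mathbb{Z}^k$, and then to exploit the single structural fact that this order makes $(\mathbb{Z}^k,+)$ a totally ordered abelian group. Writing ${\bf b}^{(i)}=(b_1,\dots,b_i)$ for the truncation to the first $i$ entries, conditions B1)--B$i$) together assert precisely that $(T)_{\bf b}=0$ whenever ${\bf b}^{(i)}>_{\mathrm{lex}}{\bf 0}$; equivalently, $(T)_{\bf b}\neq 0\Rightarrow {\bf b}^{(i)}\le_{\mathrm{lex}}{\bf 0}$. The ${\bf N}$-property is the stronger statement $(T)_{\bf b}\neq 0\Rightarrow {\bf b}<_{\mathrm{lex}}{\bf 0}$ (strict, over all $k$ entries). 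Throughout I fix a single null frame in which the hypotheses hold simultaneously for $T$ and $S$; this is the natural setting in the applications, and flagging it is the one genuinely delicate point, since each property is a priori only an ``existence of a frame'' statement.

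For part 1, I would first record the truncation lemma: if ${\bf b}^{(j)}\le_{\mathrm{lex}}{\bf 0}$ and $i\le j$ then ${\bf b}^{(i)}\le_{\mathrm{lex}}{\bf 0}$ (either the first nonzero entry already occurs among the first $i$ slots and is negative, or the first $i$ entries all vanish). Hence the ${\bf S}_j$-property implies the ${\bf S}_i$-property in the same frame, so I may assume both $T$ and $S$ satisfy B1)--B$i$). Then I apply the additive rule: any term contributing to $(T\otimes S)_{\bf b}$ has $(T)_{\tilde{\bf b}}\otimes(S)_{\hat{\bf b}}\neq 0$ with $\tilde{\bf b}^{(i)}\le_{\mathrm{lex}}{\bf 0}$ and $\hat{\bf b}^{(i)}\le_{\mathrm{lex}}{\bf 0}$, so that ${\bf b}^{(i)}=\tilde{\bf b}^{(i)}+\hat{\bf b}^{(i)}\le_{\mathrm{lex}}{\bf 0}$ because the sum of two non-positive elements of an ordered group is non-positive. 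This is exactly B1)--B$i$) for $T\otimes S$. Part 2 is the same computation: the ${\bf N}$-property of $S$ gives in particular $\hat{\bf b}^{(i)}\le_{\mathrm{lex}}{\bf 0}$, yielding the ${\bf S}_i$-property; and when $i=k$ the non-positivity of $\tilde{\bf b}$ combined with the \emph{strict} negativity of $\hat{\bf b}$ gives ${\bf b}=\tilde{\bf b}+\hat{\bf b}<_{\mathrm{lex}}{\bf 0}$, i.e. the ${\bf N}$-property.

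Part 3 is immediate from the same ordered-group fact: $\tilde{\bf b}<_{\mathrm{lex}}{\bf 0}$ and $\hat{\bf b}<_{\mathrm{lex}}{\bf 0}$ force ${\bf b}<_{\mathrm{lex}}{\bf 0}$, so $T\otimes S$ has the ${\bf N}$-property. It remains to show that contraction cannot destroy it, for which I would argue that a contraction never enlarges the set of realized boost weights. Indeed, a contraction pairs two slots through the inverse metric, whose only nonzero frame components join the conjugate pairs $(\ell^I,n^I)$ and $(m^i,m^i)$; since $\ell^I$ and $n^I$ carry opposite boost weight in the $I$-th slot and $m^i$ carries weight zero, each such pair contributes ${\bf 0}$ to the total boost weight. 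Consequently every term feeding a component of the contracted tensor descends from a component of $T\otimes S$ of the \emph{same} boost weight ${\bf b}$, so $(T\otimes S)_{\bf b}=0$ implies that the corresponding projection of the contraction vanishes as well. Thus the contraction inherits $(\cdot)_{\bf b}=0$ for every ${\bf b}\not<_{\mathrm{lex}}{\bf 0}$, which is the ${\bf N}$-property.

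The routine algebra (the truncation lemma and the ordered-group inequalities) I expect to go through mechanically. The two points needing care are the common-frame assumption noted above, and the verification that the inverse metric has pure boost weight ${\bf 0}$, which is what makes contraction boost-weight-preserving; establishing that fact cleanly --- essentially that $g^{-1}=\sum_I(e_{\ell^I}\otimes e_{n^I}+e_{n^I}\otimes e_{\ell^I})+\sum_i e_{m^i}\otimes e_{m^i}$ is boost-invariant --- is the main (if modest) obstacle in part 3.
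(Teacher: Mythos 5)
Your proof is correct. Note that the paper itself gives no proof of this proposition: it is stated immediately after the additive rule $(T\otimes S)_{\bf b}=\sum_{\tilde{\bf b}+\hat{\bf b}={\bf b}}(T)_{\tilde{\bf b}}\otimes (S)_{\hat{\bf b}}$, and that rule is evidently the intended (but unwritten) justification. Your argument is the rigorous completion of exactly that: recasting B1)--B$i$) as lexicographic non-positivity of the truncated boost weight, and then using that $\mathbb{Z}^k$ with the lexicographic order is a totally ordered abelian group, neatly replaces what would otherwise be a case analysis on the position of the first nonzero entry. Your two flagged points are also well taken rather than pedantic: the common-frame reading is genuinely required (each property is only an existence-of-a-frame statement, and the proposition is meant to be applied to curvature tensors of a single space evaluated in one frame), and the fact that the metric and its inverse have pure boost weight zero --- which is what makes contraction weight-preserving in part 3 --- is precisely the fact the paper itself invokes later, in the proof that an even-ranked tensor with the ${\bf N}^G$-property yields a nilpotent operator.
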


It is also useful to define a set of related conditions.  Consider a tensor, $T$, that does not necessarily
meet any of the conditions above.  However, since the boost weights ${\bf b}\in \mathbb{Z}^k\subset{\mathbb
R}^k$, we can consider a linear $GL(k)$ transformation, $G:\mathbb{Z}^k\mapsto \Gamma$, where $\Gamma$ is a
lattice in $\mathbb{R}^k$.  Now, if there exists a $G$ such that the transformed boost weights, $G{\bf b}$,
satisfy (some) of the conditions in Def.\ref{cond}, we will say, correspondingly, that $T$ possesses the ${\bf
S}^G_i$-property.  Similarly, for the ${\bf N}^G$-property.

If we have two tensors $T$ and $S$ both possessing the ${\bf S}_i^G$-property, with the same $G$, then when we take the tensor product: 
\[ (T\otimes S)_{G{\bf b}}=\sum_{G\hat{\bf b}+G\tilde{\bf b}=G{\bf b}}(T)_{G\hat{\bf b}}\otimes(S)_{G\tilde{\bf b}}.\]
Therefore, the tensor product will also possess the ${\bf S}_i^G$-property, with the same $G$.  This will be
useful later when considering degenerate tensors and metrics with degenerate curvature tensors.  Note
also that the ${\bf S}_i^G$-property reduces to the ${\bf S}_i$-property for $G=I$ (the identity).

\begin{rem}
A tensor, $T$, satisfying the ${\bf S}_i^G$-property or ${\bf N}^G$-property is not generically 
determined by its invariants in the sense that there may be another tensor, $T'$, with 
precisely the same invariants. The ${\bf S}_i$-property thus implies a certain \emph{degeneracy} in the tensor. 
\end{rem}

\subsection{The VSI properties}

We can now state two results that will be useful for us. 

\begin{thm}
A pseudo-Riemannian space is VSI if and only if all the curvature operators are nilpotent. 
\end{thm}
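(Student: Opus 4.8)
The plan is to combine the result cited from \cite{OP}, that every polynomial curvature invariant can be realized as the trace of some curvature operator, with the elementary linear-algebra dictionary that a finite-dimensional operator is nilpotent if and only if the traces of all of its powers vanish. The two implications then come out as mirror images of this dictionary.

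First I would treat the direction that nilpotency implies VSI. Let $I$ be an arbitrary polynomial curvature invariant. By \cite{OP} we may write $I=\Tr(A)$ for some curvature operator $A$. If every curvature operator is nilpotent, then in particular $A$ is nilpotent, and a nilpotent operator has all eigenvalues equal to zero, hence vanishing trace; thus $I=\Tr(A)=0$. Since $I$ was arbitrary, all polynomial curvature invariants vanish and the space is VSI.

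For the converse I would suppose the space is VSI and let $A$ be any curvature operator acting on a fixed space of dimension $N$. Each power $A^n$, $n\ge 1$, is again a curvature operator (a composition of curvature operators built from the same curvature data), so $\Tr(A^n)$ is a polynomial curvature invariant and therefore vanishes by the VSI hypothesis. Hence all power sums $p_n=\Tr(A^n)=\sum_j \mu_j^{\,n}$ of the eigenvalues $\mu_j$ of $A$ vanish for $1\le n\le N$. Invoking Newton's identities (valid in characteristic zero) to pass from power sums to elementary symmetric functions, every coefficient of the characteristic polynomial of $A$ except the leading one must vanish, so the characteristic polynomial is $\lambda^N$. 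By Cayley--Hamilton, $A^N=0$, i.e. $A$ is nilpotent. A minor point is that the $\mu_j$ may be complex; since $A$ is real one passes to its complexification, where real and complex nilpotency coincide, so the argument applies verbatim.

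The genuinely nontrivial ingredient, and hence the main obstacle, is the representation result of \cite{OP} that \emph{every} invariant arises as such a trace and, in the same framework, that powers and products of curvature operators remain curvature operators whose traces are legitimate invariants. Granting these structural facts, which I would simply cite rather than reprove, the remainder of the argument is exactly the standard equivalence between a vanishing sequence of power-traces and nilpotency.
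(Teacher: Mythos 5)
Your proposal is correct and follows essentially the same route as the paper: both rest on the result of \cite{OP} that every polynomial curvature invariant is the trace of some curvature operator, combined with the standard equivalence that an operator is nilpotent if and only if the traces of all its powers vanish. You merely make explicit (via Newton's identities, Cayley--Hamilton, and complexification) the two equivalences that the paper states in its short chain of implications.
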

\begin{proof}
The proof of this follows almost directly from paper \cite{OP}. In particular, consider a curvature operator ${\sf T}$. Then:
\beq
&\text{All invariants  }\Tr({\sf T}^k)\text{ are zero.}&\nonumber \\
&\Updownarrow& \nonumber\\
&\text{All eigenvalues of }{\sf T}\text{ are zero.}&\nonumber \\
&\Updownarrow& \nonumber\\
&{\sf T}\text{ is nilpotent.}&\nonumber
\eeq
Since any polynomial curvature invariant is the trace of some curvature operator, the theorem follows. 
\end{proof}
So if a curvature operator is nilpotent then we know that all its invariants are zero. The next result gives us a necessary criterion for when an operator is nilpotent.
\begin{thm}
If a even-ranked tensor $T$ possesses the ${\bf N}^G$-property, then the operator ${\sf T}$ (obtained by raising/lowering indices) is nilpotent.
\end{thm}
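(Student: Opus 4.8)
The plan is to realise the operator ${\sf T}$ as a \emph{boost-graded} endomorphism and to show that the ${\bf N}^G$-property forces it to be strictly triangular with respect to an ordering of the frame by boost charge. Writing the rank-$2p$ tensor $T$ as an endomorphism ${\sf T}$ of the tensor space $V^{\otimes p}$ by raising $p$ of its indices with the metric, I would first record the elementary bookkeeping: in the null frame each basis vector $E_a\in\{E_{\ell_I},E_{n_I},E_{m_i}\}$ is a boost eigenvector, with ``charge'' ${\bf w}(E_{\ell_I})=-{\bf e}_I$, ${\bf w}(E_{n_I})=+{\bf e}_I$, ${\bf w}(E_{m_i})={\bf 0}$, and a product basis vector $E_A=E_{a_1}\otimes\cdots\otimes E_{a_p}$ carries charge ${\bf w}(A)=\sum_j{\bf w}(E_{a_j})$, which takes only finitely many values as $A$ ranges over the (finite) product basis. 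The matrix element ${\sf T}^A_{\ B}$ then transforms under the boosts with boost weight ${\bf b}={\bf w}(A)-{\bf w}(B)$; since the metric used to raise indices has boost weight ${\bf 0}$, this ${\bf b}$ coincides with the boost weight of the corresponding component of $T$. Hence the decomposition $T=\sum_{\bf b}(T)_{\bf b}$ is exactly the decomposition of ${\sf T}$ into blocks, the ${\bf b}$-block mapping the charge-${\bf w}$ subspace into the charge-$({\bf w}+{\bf b})$ subspace.

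The next step is to feed in the ${\bf N}^G$-property. By definition there is a null frame and a $G\in GL(k)$ such that, after transforming the boost weights by $G$, conditions B1)--B$k$) of Definition \ref{cond} hold and in addition $(T)_{\bf 0}=0$; taken together these say precisely that $(T)_{\bf b}=0$ whenever $G{\bf b}$ is lexicographically non-negative, so that every nonzero block has $G{\bf b}\prec{\bf 0}$ strictly. The strictness is exactly what the vanishing of the ${\bf b}={\bf 0}$ part provides, and is the only feature distinguishing the ${\bf N}^G$-property from the weaker ${\bf S}_k^G$-property. Setting $\tilde{\bf w}(A)=G{\bf w}(A)$, a nonzero matrix element ${\sf T}^A_{\ B}$ then forces $\tilde{\bf w}(A)-\tilde{\bf w}(B)=G{\bf b}\prec{\bf 0}$, i.e. $\tilde{\bf w}(A)\prec\tilde{\bf w}(B)$, so ${\sf T}$ strictly lowers the lexicographic value of the transformed charge.

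Finally I would order the finitely many transformed charges $\tilde{\bf w}$ that occur, say ${\bf v}_1\prec{\bf v}_2\prec\cdots\prec{\bf v}_r$, and group the basis of $V^{\otimes p}$ into the corresponding charge subspaces. By the previous step ${\sf T}$ is strictly block-triangular with respect to this grading, with vanishing diagonal blocks, so expanding $({\sf T}^N)^A_{\ B}=\sum{\sf T}^A_{\ C_1}\cdots{\sf T}^{C_{N-1}}_{\ B}$ produces a strictly decreasing chain $\tilde{\bf w}(A)\prec\tilde{\bf w}(C_1)\prec\cdots\prec\tilde{\bf w}(B)$, which is impossible once $N\ge r$; hence ${\sf T}^r=0$ and ${\sf T}$ is nilpotent. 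The main obstacle is not the nilpotency step itself, which is the standard fact that an operator lowering a totally ordered grading is nilpotent, but the careful bookkeeping of boost charges under index raising/lowering needed to identify the boost weight of ${\sf T}^A_{\ B}$ with ${\bf w}(A)-{\bf w}(B)$, together with checking that the lexicographic order remains a genuine total order after the invertible transformation $G$ so that the finiteness argument still closes.
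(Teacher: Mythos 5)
Your proof is correct, but it follows a genuinely different route from the paper's. The paper argues through invariants: since $T$ possesses the ${\bf N}^G$-property, so do its tensor powers (by the additive rule for boost weights), and since the metric has boost weight ${\bf 0}$, the operator powers ${\sf T}^k$ inherit the property as well; a trace is a full contraction and therefore picks out precisely the boost-weight-${\bf 0}$ part, which the ${\bf N}^G$-property kills, so $\Tr({\sf T}^k)=0$ for all $k$; the coefficients of the characteristic polynomial are built from these traces, so all eigenvalues vanish and ${\sf T}$ is nilpotent. You instead prove nilpotency \emph{directly}, by exhibiting ${\sf T}$ as a strictly triangular operator with respect to the (finite, totally ordered) grading of the frame-induced basis by $G$-transformed boost charge, so that chains of nonzero matrix elements must strictly descend and terminate. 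Both arguments hinge on the same underlying fact --- that the lexicographic order on boost weights is compatible with addition --- but they deploy it differently: the paper uses it to propagate the ${\bf N}^G$-property to products and powers, you use it to order the charges along a chain. What your version buys is self-containedness (no appeal to Newton's identities or to the equivalence between vanishing traces and nilpotency, which is the content of the paper's preceding theorem) and an explicit bound on the nilpotency index, namely the number $r$ of distinct transformed charges occurring in the relevant tensor space. What the paper's version buys is economy within its own framework: it recycles the proposition on tensor products and the trace-operator machinery of the first theorem, and it makes transparent the link to the VSI property itself, since vanishing of the invariants $\Tr({\sf T}^k)$ is exactly the statement being exploited. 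One small point to keep tidy in your write-up: the sign conventions relating the boost weight of a matrix element ${\sf T}^A_{\ B}$ to the charges ${\bf w}(A)$, ${\bf w}(B)$ depend on whether an index is up or down, but since the metric used to raise and lower indices has boost weight ${\bf 0}$, the only effect of a different convention is to flip whether ${\sf T}$ raises or lowers the charge, which leaves the strict monotonicity --- and hence the nilpotency conclusion --- intact.
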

\begin{proof}
This result follows from an equivalent proof to that in \cite{OP}, which we will include here for completeness. 
We note that since $T$ possesses the ${\bf N}^G$-property, so do $T\otimes T$, $T\otimes T\otimes T$, etc. 
Moreover, since the metric ${\bf g}$ is of boost weight zero, the operators ${\sf T}^k$ also possess the  
${\bf N}^G$-property. The coefficients of the eigenvalue equation of ${\sf T}$ consist of 
traces of ${\sf T}^k$ which necessarily has zero boost weight, thus we get $\Tr({\sf T}^k)=0$ and all eigenvalues are consequently zero. This implies that ${\sf T}$ is nilpotent.
\end{proof}

\section{Invariant null planes: Walker metrics} 
A set of vectors ${\mbold \ell}^I$, $I=1,...,k'$, is said to span a $k'$-dimensional invariant plane 
if the $k'$-vector, ${\mbold\ell}^1\wedge...\wedge{\mbold\ell}^{k'}$, is \emph{recurrent}. Using the dual one-forms, with components $\ell^I_\mu$, the recurrent requirement is equivalent to:
\[ \left(\ell^1_{[\mu_1}\ell^2_{\mu_2}...\ell^{k'}_{\mu_{k'}]}\right)_{;\nu}=
\ell^1_{[\mu_1}\ell^2_{\mu_2}...\ell^{k'}_{\mu_{k'}]}k_{\nu},\]
where $k_{\nu}$ is an arbitrary vector.
We can interpret this as requiring that the volume form of the invariant plane is recurrent. 

If, in addition, the vectors ${\mbold\ell}^I$, $I=1,...,k'$, are all null
and mutually orthogonal, then the invariant plane is \emph{null}.
Henceforth, we assume that the vectors span a $k'$-dimensional invariant null
plane, $\mathcal{N}$.  Consider the orthogonal complement,
$\mathcal{N}_\bot$.  Then, since the volume form of $\mathcal{N}_\bot$ is
the Hodge dual of the volume form of $\mathcal{N}$, the complement
$\mathcal{N}_\bot$ is also invariant.  Note that since $\mathcal{N}$ is
null, we have $\mathcal{N}\subset\mathcal{N}_\bot$.

If the pseudo-Riemannian space is a \emph{Walker metric}, which is defined as a space admitting a $k'$-dimensional invariant null plane ${\cal N}$,
then it can be shown that the set of null vectors ${\mbold\ell}_{I}\subset \mathcal{N}$ 
are such that 
\[ [{\mbold\ell}_{I},{\mbold\ell}_{J}] \subset \mathcal{N}.\]
 By Frobenius' theorem, these vectors then span a submanifold for which there 
 exists an adapted set of coordinates, $(v^1,...,v^{k'})$ spanning the null-plane, 
 and a complimentary set, $(u^1,...,u^{k'},x^i)$. Furthermore, we can choose the 
 set of null-vectors ${\mbold\ell}_I$ so that ${\mbold\ell}_I=\frac{\partial}{\partial v^I}$, 
 and their dual one-forms are ${\mbold\ell}^I=\mathrm{d}u^I$. 

Let us choose one of these null-forms, say ${\mbold\ell}^1=\ell_1\d u^1=\d u^1$, and form the null frame $\{ {\mbold\ell}^1,{\mbold n}^1,{\mbold\omega}^i\}$. Since this is a null-frame, the corresponding connection (spin) coefficients fulfill the fundamental antisymmetry:
\beq
\Gamma_{\mu\nu\alpha}=-\Gamma_{\nu\mu\alpha},
\eeq
where $\Gamma_{\mu\nu\alpha}\equiv g_{\mu\rho}\Gamma^\rho_{~\nu\alpha}$.
Furthermore, using the Frobenius theorem for both $\mathcal{N}$ and $\mathcal{N}_\bot$ we get:
\beq
\d {\mbold\ell}^1=0,\quad \d {\mbold\omega}^i=-\frac 12c^i_{jk}{\mbold\omega}^j\wedge{\mbold\omega}^k-c^i_{1j}{\mbold\ell}^1\wedge{\mbold\omega}^j.
\eeq
 Using the relation, $\d {\mbold\omega}^\alpha=-\Gamma^\alpha_{[\mu\nu]}{\mbold\omega}^\mu\wedge{\mbold\omega}^\nu$, 
 then implies the vanishing of certain connection coefficients. For example, consider:
\beq
{\ell}^\mu\nabla_\mu\ell_\nu=-\Gamma_{0\nu 0}=-\Gamma_{00\nu}=0.
\eeq
Hence, $\ell_{\mu}$ is geodesic. Furthermore, since $\d {\mbold\ell}^1=0$, $\ell_\mu$ is vorticity-free. 
This implies we can write the covariant derivative as: 
\beq
\ell_{\mu;\nu}=L_{11}\ell_\mu\ell_\nu+L_{1i}\ell_{(\mu}\omega^i_{~\nu)}+L_{ij}\omega^i_{~\mu}\omega^j_{~\nu},
\eeq
where the $\omega^i_{~\mu}$ are the components of ${\mbold\omega}^i$. Defining the extrinsic curvature and the vorticity: 
\[ K^{ij}=L^{(ij)}=\ell_{(\mu;\nu)}\omega^{i\mu}\omega^{j\nu}, \quad  A^{ij}=L^{[ij]}=\ell_{[\mu;\nu]}\omega^{i\mu}\omega^{j\nu},\] 
we note that $A_{ij}=0$ (vorticity-free). Calculating $K_{ij}$:
\[ K_{ij}=-\Gamma_{0ij}-\Gamma_{0ji}=\Gamma_{i0j}+\Gamma_{j0i}=\Gamma_{ij0}+\Gamma_{ji0}=0.\]
Therefore, $K_{ij}=0$ and $\ell^1$ is expansion-free and shear-free. Consequently, we have:
\beq
\ell_{\mu;\nu}=L_{11}\ell_\mu\ell_\nu+L_{1i}\ell_{(\mu}\omega^i_{~\nu)}.
\eeq
The Walker metrics 
thus allows for a null-vector ${\mbold\ell}^1$ which is geodesic, expansion-free, shear-free and vorticity-free. 
In fact, we note that the above metric possesses $k'$ null-vectors ${\mbold\ell}^I$ which are geodesic, vorticity-free, shear-free and expansion-free. This is reminicient of the Kundt form of the metric in Lorentzian geometry. Indeed, this leads us to define pseudo-Riemannian Kundt metrics in analogy to the Lorentzian case. 
Indeed, these pseudo-Riemannian Kundt metrics will have the Walker metrics as a special case (however, not all pseudo-Riemannian Kundt metrics will be Walker metrics).

 Furthermore, Walker \cite{Walker}  showed that the requirement of an 
invariant $k'$-dimensional null plane implies that the (Walker) metric can be written in the canonical form: 
\beq
\mathrm{d}s^2=\d u^I(2\delta_{IJ}\d v^J+B_{IJ}\d u^J+H_{Ii}\d x^i)+A_{ij}\d x^i \d x^j,
\eeq
where $B_{IJ}$ is a symmetric matrix which may depend on all of the coordinates, while $H_{Ii}$ and $A_{ij}$ does not depend on the coordinates $v^I$. Note that if the signature is $(k,k+n)$, then $k'\leq k$. If $k'=1$, then there exists an invariant null-line, while if $k'=k$, then the invariant null-plane is of maximal dimension.

\section{Pseudo-Riemannian Kundt metrics}

In the Lorentzian case ($k=1$) the Kundt metrics play an important role
for degenerate metrics, and VSI metrics in particular \cite{VSI}.  Here we will argue
that their pseudo-Riemannian analogues also play an important role for
pseudo-Riemannian spaces of arbitary signature.

Let us define the pseudo-Riemannian Kundt metrics in a similar fashion, namely: 
\begin{defn}
A pseudo-Riemannian Kundt metric is a metric which possesses a non-zero null vector ${\mbold\ell}$ which is geodesic, expansion-free, twist-free and shear-free.
\end{defn}

We will consequently consider metrics of the form
\beq 
\d s^2=2\d u\left[\d v+H(v,u,x^C)\d u+W_{A}(v,u,x^C)\d x^A\right]+{g}_{AB}(u,x^C)\d x^A\d x^B
\label{Kundt}
\eeq
The metric (\ref{Kundt})
possesses a null vector field ${\mbold\ell}$ obeying\footnote{If, in addition $L_{1i}=0$, the vector $\ell_\mu$ is also recurrent, and if $L_{1i}=L_{11}=0$, then $\ell_\mu$ is covariantly constant. }
\[ \ell_{\mu;\nu}=L_{11}\ell_\mu\ell_\nu+L_{1i}\ell_{(\mu}\omega^i_{~\nu)},\]
and consequently, 
\beq
\ell^{\mu}\ell_{\nu;\mu}=\ell^{\mu}_{~;\mu}=\ell^{\nu;\mu}\ell_{(\nu;\mu)}=\ell^{\mu;\nu}\ell_{[\mu;\nu]}=0;
\eeq 
i.e., it is geodesic, non-expanding, shear-free and
non-twisting. If this is a pseudo-Riemannian space of signature $(k,k+m)$, then the transversal metric 
\[ \d s^2_{1}={g}_{AB}(u,x^C)\d x^A\d x^B,\] 
will be of signature $(k-1,k-1+m)$. 

By using a normalised frame we can calculate the boost weight ${\bf b}=(1,b_2,...,b_k)$ and $(0,b_2,...,b_k)$ components of the Riemann tensor: 
\beq
R_{\hat 0\hat 1\hat 0\hat A}&=&-\frac 12 W_{\hat A,vv}, \\
R_{\hat 0\hat 1\hat 0\hat 1}&=& -H_{,vv}+\frac 14\left(W_{\hat{A},v}\right)\left(W^{\hat A,v}\right), \\
R_{\hat 0\hat 1\hat A\hat B}&=& W_{[\hat A}W_{\hat B],vv}+W_{[\hat A;\hat B],v}, \\
R_{\hat 0\hat A\hat 1\hat B}&=& \frac 12\left[-W_{\hat B}W_{\hat A,vv}+W_{\hat A;\hat B,v}-\frac 12 \left(W_{\hat A,v}\right)\left(W_{\hat B,v}\right)\right], \\
R_{\hat A\hat B\hat C\hat D}&=&\tilde{R}_{\hat A\hat B\hat C\hat D}.
\eeq

Note that the Riemann tensor satisfies the ${\bf S}_1$-property if 
$R_{\hat 0\hat 1\hat 0\hat A}=-\frac 12 W_{\hat A,vv}=0$. Now, whether the Riemann 
tensor satisfies any of the other requirements depends on the components of boost weight ${\bf b}=(0,b_2,...,b_k)$: 
\beq
H_{,vv}-\frac 14\left(W_{\hat A,v}\right)\left(W^{\hat A,v}\right) &=& \sigma, \\
W_{[\hat A;\hat B],v} &=& {\sf a}_{\hat A\hat B}, \\
W_{(\hat A;\hat B),v}-\frac 12 \left(W_{\hat A,v}\right)\left(W_{\hat B,v}\right) &=& {\sf s}_{\hat A\hat B},
\label{Wcsi4}\eeq
and the components $\tilde{R}_{\hat A\hat B\hat C\hat D}$.

\subsection{The ${\bf N}$-property and VSI spaces}
We can now consider the conditions which are required for these pseudo-Riemannian Kundt metrics 
to be VSI metrics. 

To get a set of necessary conditions we consider the Riemann tensor, $R$ as an operator ${\sf
R}:\wedge^2T_pM\mapsto \wedge^2T_pM$, as follows:  ${\sf R}=(R^{\alpha\beta}_{\phantom{\alpha\beta}\mu\nu})$.
We note that for the pseudo-Riemannian Kundt metrics, if $ R_{\hat 0\hat 1\hat 0\hat A}=-\frac 12 W_{\hat
A,vv}=0$, the question of when ${\sf R}$ possesses the ${\bf N}$-property is related to the values of the components
$\sigma$, ${\sf a}_{\hat A\hat B}$, ${\sf s}_{\hat A\hat B}$ and $\tilde{R}_{\hat A\hat B\hat C\hat D}$.

Therefore, a sufficient criterion for the Riemann curvature operator to have only zero polynomial invariants is that $\sigma=0$ and the matrices 
\[ \left[{\sf a}^{\hat A}_{\phantom{\hat A}\hat B}\right], \quad \left[{\sf s}^{\hat A}_{\phantom{\hat A}\hat B}\right],\quad  \left[{\widetilde R}^{\hat A\hat B}_{\phantom{\hat A\hat B}\hat{C}\hat{D}}\right], \]
have only zero-eigenvalues; hence, they are \emph{nilpotent}.  
\subsection{Nested Kundt metrics}

Consider now the case where the transverse metric $\d
s_{1}={g}_{AB}(u,x^C)\d x^A\d x^B$ of signature $(k-1,k-1+m)$ is also
pseudo-Riemannian Kundt.  This means that it can also be written in Kundt
form (where $u$ is considered as a parameter).  The transverse space will
now also be independent of one of the null-coordinates.  Again a special
case would be when this transverse space is also pseudo-Riemannian Kundt.

Therefore, let $(v^{a_I},u^{b_I})$, $a_I,b_I=1,...,I$ be null-coordinates, and $x^{\mu_I}$, $\mu_I=2I,..,k+m$ be ``transversal'' coordinates. We can then consider the $I$-th nested Kundt metric:
\beq
\d s_I^2=2\d{u}^I\left[\d v^I+H^I(v^I,u^a,x^{\rho_I})\d u^I+W^I_{\mu_I}(v^I,u^a,x^{\rho_I})\d x^{\mu_I}\right]+{g}^I_{\mu_I\nu_I}(u^a,x^{\rho_I})\d x^{\mu_I}\d x^{\nu_I},
\label{KundtI}
\eeq
for which the boost weight component ${\bf b}=(0,..0,1,b_{I},...,b_k)$ is $-{\frac{1}{2}}W^I_{\hat {\mu}_I,v^Iv^I}$. Similarly, the boost weight ${\bf b}=(0,...,0,b_{I},...,b_k)$ components can be written in terms of: 
\beq
H^I_{,v^Iv^I}-\frac 14\left(W^I_{\hat \mu_I,v^I}\right)\left(W^{I~\hat \mu_I,v^I}\right) &=& \sigma^I, \\
W^I_{[\hat \mu_I;\hat \nu_I],v^I} &=& {\sf a}^I_{\hat \mu_I\hat \nu_I}, \\
W^I_{(\hat \mu_I;\hat \mu_I),v^I}-\frac 12 \left(W^I_{\hat \mu_I,v^I}\right)\left(W^I_{\hat \nu_I,v^I}\right) &=& {\sf s}^I_{\hat \mu_I\hat \nu_I},
\label{Wcsi4I}\eeq
and the components $\tilde{R}^I_{\hat \mu_I\hat \nu_I\hat \rho_I\hat \lambda_I}$. 
\section{Pseudo-Riemannian Kundt VSI metrics}

We can now construct VSI metrics from the nested Kundt metrics by requiring that 
they fulfill the ${\bf N}$-property, and then iteratively solving the above equations for the components. We can assume the metric is an $I$-th order nested Kundt metric and solve for the metric components $H^I$ and $W^I_i$. 

We will start out by considering the lower-dimensional cases.  In 4 dimensions (4D), the only case not
considered in detail is the Neutral $(2,2)$-signature case.  In 5 dimensions, the only case not consider
previously in the literature is the $(2,3)$-signature case.  These two cases will be considered next.  We
will also give a class of VSI metrics of $(k,k+m)$-signature.

\subsection{4D Neutral case}
Let us consider the 4D neutral case. Here we can write 
\beq
\d s^2=2\left({\mbold\ell}^1{\mbold n}^1+{\mbold\ell}^2{\mbold n}^2\right). 
\eeq
We will consider the pseudo-Riemannian Kundt case and at the 1st level the transverse space  is 2-dimensional. Conseqently, there is only one independent component of the Riemann tensor, namely $\widetilde{R}_{3434}$. This is of boost weight ${\bf b}=(0,0)$ and so, requiring the ${\bf N}$-property, this must be flat space. Therefore, we can write: 
\[ 2{\mbold\ell}^2{\mbold n}^2=2\d u^2\d v^2=-\d T^2+\d X^2.\] 
Based on the previous discussion, we can find two classes of pseudo-Riemannian Kundt VSI metrics. They can be written: 
\beq
\d s^2=2\d u^1\left(\d v^1+H\d u^1+W_{\mu_1}\d x^{\mu_1}\right)+2\d u^2\d v^2,
\eeq 
where:
\paragraph{Null case:}
\beq
W_{\mu_1}\d x^{\mu_1}&=& v^1W^{(1)}_{u^2}(u^1,u^2)\d u^2+W^{(0)}_{u^2}(u^1,u^2,v^2)\d u^2+W^{(0)}_{v^2}(u^1,u^2,v^2)\d v^2,\nonumber \\
H&=& v^1 H^{(1)}(u^1,u^2,v^2)+H^{(0)}(u^1,u^2,v^2),
\eeq
\paragraph{Spacelike/timelike case:}
\beq
W_{\mu_1}\d x^{\mu_1}&=& v^1W^{(1)}\d X+W^{(0)}_T(u^1,T,X)\d T+W^{(0)}_X(u^1,T,X)\d X, \nonumber\\
H&=& \frac{(v^1)^2}8{\left(W^{(1)}\right)^2}+v^1H^{(1)}(u^1,T,X)+H^{(0)}(u^1,T,X),
\eeq
and 
\beq
W^{(1)}=-\frac{2\epsilon}{X}, \text{ where } \epsilon=0, 1.
\eeq

We note that these possess an invariant null-line if $W^{(1)}=0$, and a 2-dimensional invariant null-plane if $W^{(0)}_{v^2}=0$ for the null 
case\footnote{In order for the spacelike/timelike case to possess an invariant 
null 2-plane, it needs to be a special case of the null case.}.
\subsection{The $(2,3)$-signature case}
Considering the 5D (2+3)-signature case,  we can write 
\beq
\d s^2=2\left({\mbold\ell}^1{\mbold n}^1+{\mbold\ell}^2{\mbold n}^2\right)+(\mbold m^1)^2. 
\eeq
Again considering the pseudo-Riemannian Kundt case, we see that the 1st level 
pseudo-Riemannian Kundt has 3-dimensional transverse space. At the 2nd level, 
the transverse space is 1-dimensional. We can thus choose ${\mbold\omega}^1=(\d x)^2$. At the 2nd level, we therefore get the standard Lorentzian Kundt VSI spaces: 
\beq
\d s^2_1=2\d v^2\left(\d v^2+H^2\d u^2+W^2\d x\right)+(\d x)^2 
\eeq
where
\beq
H^2&=&\frac{(v^2)^2}{8}\left(W^{2,(1)}\right)^2+v^2H^{2,(1)}(u^1,u^2,x)+H^{2,(0)}(u^1,u^2,x), \label{5D:H2}\\
W^2_{\mu_2}\d x^{\mu_2}&=&v^2W^{2,(1)}\d x+W^{2,(0)}(u^1,u^2,x)\d x,\label{5D:W2}
\eeq
where 
\beq
W^{2,(1)}=-\frac{2\epsilon}{x}.
\eeq
Then using these metrics and solving the remaining equations for the 5D Kundt metric gives us several cases. The metric can be written:
\beq
\d s^2&=&2\d u^1\left(\d v^1+H^1\d u^1+W^1_{\mu_1}\d x^{\mu_1}\right)\nonumber \\
&& +2\d u^2\left(\d v^2+H^2\d u^2+W^2\d x\right)+(\d x)^2, 
\eeq
where the functions $H^2$ and $W^2$ are given by eqs. (\ref{5D:H2}) and (\ref{5D:W2}). 

\paragraph{Case 1, $\epsilon=1$:}
\beq
H^1&=& {(v^1)^2}H^{1,(2)}+
(v^1)H^{1,(1)}(u^1,u^2,v^2,x)+H^{1,(0)}(u^1,u^2,v^2,x), \nonumber \\
W^1_{\mu_1}\d x^{\mu_1}&=&v^1{\bf W}^{1,(1)}+W^{1,(0)}_{\mu_1}(u^1,u^2,v^2,x)\d x^{\mu_1},
\eeq
where, using the definition $(u^1,v^1,u^2,v^2)=(u,v,U,V)$, 
\beq
{\bf W}^{1,(1)}&=&-\frac{2\d V}{V+Cx+Dx^2}+\frac{2[V-Dx^2]\d x}{[V+Cx+Dx^2]x} \\
&& +\frac{2[DV+x(CD-C_{,U})+x^2(D^2-D_{,U})]\d U}{V+Cx+Dx^2}, \nonumber \\
H^{1,(2)}&=& -\frac{x[2(CD-C_{,U})+x(D^2-2D_{,U})]}{2[V+Cx+Dx^2]^2}, 
\eeq
where $C=C(u,U)$ and $D=D(u,U)$ are arbitrary functions. We note that 
\[ {\bf W}^{1,(1)}=\d\phi, \quad \phi=2\ln x-2\ln[V+Cx+Dx^2]+2\int D(u,U)\d U.\]
\paragraph{Case 2, $\epsilon=0$, null:}
 
\beq
W^1_{\mu_1}\d x^{\mu_1}&=& v^1W^{1,(1)}_{u^2}(u^1,u^2,{x})\d {u^2}+W^{1,(0)}_{\mu_1}(u^1,u^2,v^2,x)\d x^{\mu_1}, \nonumber \\
H^1&=& v^1H^{1,(1)}(u^1,u^2,v^2,x)+H^{1,(0)}(u^1,u^2,v^2,x).
\eeq
We note that these possess an invariant null-line if $\epsilon=0$ (null case), and $W^{1,(1)}=0$. The metric possesses a 2-dimensional invariant null plane if $\epsilon=0$, $W^{1,(0)}_{v^1}=W^{1,(0)}_{v^2}=0$ and $\partial_{v^2}W^{1,(0)}_x=0$.
\subsection{A class of $(k,k+m)$-signature VSI metrics. }
\label{arbVSI}
We can now give a class of metrics that are VSI and are the generalisations of of the null-case mentioned previously. 
We write them as: 
\beq
\d s^2=\sum_{I=1}^k2{\mbold\ell}^{I}{\mbold n}^{I}+\sum_{i,j=1}^m\delta_{ij}{\mbold m}^i{\mbold m}^j,
\eeq
where 
\beq
{\mbold\ell}^{I}&=&\d u^I, \\
{\mbold n}^{I}&=&\d v^I+\left[v^IH^{I,(1)}(u^{a},v^{a_I},x^i)+H^{I,(0)}(u^{a},v^{a_I},x^i)\right]\d u^I\nonumber \\
& & +\left[v^IW^{I,(1)}(u^a,v^{a_{I+1}},x^i)\d u^{I+1}+W^{I,(0)}_{\mu_I}(u^a,v^{a_I},x^i)\d x^{\mu_I}\right], \\
{\mbold m}^i&=& \d x^i,
\eeq
and where the indices have the following ranges:
\beq
&&a= 1,2,...,k; \quad 
a_I= I+1,...,k; \quad i=1,...,m,\nonumber \\
&& \d x^{\mu_I}=\{\d u^{a_I},\d v^{a_I},\d x^i\}.
\eeq
In the appendix a sketch of a proof is given which shows that these spaces are indeed VSI. 

\section{A class of Ricci-flat metrics}
Let us consider the special case where the only non-zero functions are $H^{I,(0)}$:
\beq
{\mbold\ell}^{I}=\d u^I, \quad
{\mbold n}^{I}=\d v^I+H^{I,(0)}(u^{a},v^{a_I},x^i)\d u^I,\quad
{\mbold m}^i= \d x^i.
\eeq
We note that these are all Walker metrics possessing an invariant $k$-dimensional null plane. 
The only non-zero components of the Ricci tensor are:
\[ R_{u^Iu^I}=-\Box H^{I,(0)},\] where
\[ \Box=\sum_{J=1}^k2\left(\frac{\partial}{\partial u^J}-H^{J,{(0)}}\frac{\partial}{\partial v^J}\right)\frac{\partial}{\partial v^J}+\sum_{i}\left(\frac{\partial}{\partial {x^i}}\right)^2.\]

We note that in general the Ricci operator, ${\sf R}=(R^\mu_{~\nu})$, is 2-step nilpotent; i.e., ${\sf R}^2=0$. In order for this metric to be Ricci-flat, we need to solve the $k$ equations:
\beq
\Box H^{I,(0)}=0.
\eeq
Since the functions $H^{I,(0)}$ do not depend on $v^J$, for $J=1,...,I$, these equations simplify and we have:
\beq
\Box H^{I,(0)}=\left[\sum_{J=I+1}^k2\left(\frac{\partial}{\partial u^J}-H^{J,{(0)}}\frac{\partial}{\partial v^J}\right)\frac{\partial}{\partial v^J}+\sum_{i}\left(\frac{\partial}{\partial {x^i}}\right)^2\right]H^{I,(0)}=0.
\eeq
We can therefore start solving the equation for $I=k$, which reduces to solving the 
Laplacian over the space $\delta_{ij}\d x^i\d x^j$. The function $H^{I,(0)}$ can then 
be used in the equation for  $H^{I-1,(0)}$, which again can (at least in principle) be solved. Then this can again be inserted into the equation for  $H^{I-2,(0)}$, etc. One can therefore systematically solve these equations since the equation for  $H^{I,(0)}$ only involves the functions  $H^{I+1,(0)}$,  $H^{I+2,(0)}$, etc. 
\subsection{Neutral 4D space}
Consider first the Neutral 4D case. Here we have, $H^2=0$; consequently, there 
is only one differential equation that needs to be solved: 
\beq
\Box H^{1,(0)}(u^1,u^2,v^2)=\frac{\partial^2}{\partial u^2\partial v^2}H^{1,(0)}(u^1,u^2,v^2)=0.
\eeq
This can be solved in the standard way:
\beq
H^{1,(0)}(u^1,u^2,v^2)=f(u^1,u^2)+g(u^1,v^2),
\eeq
for arbitrary functions $f$ and $g$. Consequently, we get the standard Ricci-flat neutral metrics: 
\beq
\d s^2=2\d u^1\left[\d v^1+(f(u^1,u^2)+g(u^1,v^2))\d u^1\right]+2\d u^2\d v^2.
\eeq 
This is also, by construction, a VSI space. 
\subsection{The $(2,3)$-signature case.}
Here, requiring Ricci flatness, we get the two equations:
\beq
\Box H^{1,(0)}(u^1,u^2,v^2,x)&=&\left(\frac{\partial^2}{\partial u^2\partial v^2}+\frac{\partial^2}{\partial x^2}\right)H^{1,(0)}(u^1,u^2,v^2,x)=0,\nonumber \\
\Box H^{2,(0)}(u^1,u^2,x)&=&\frac{\partial^2}{\partial x^2}H^{2,(0)}(u^1,u^2,x)=0, 
\eeq
These two equations, which are the Laplacians in flat 1- and 3-dimensional space, respectively, 
can easily be solved.

\section{Discussion}
In this paper we have considered the class of VSI spaces in arbitrary signature spaces. 
By using a boost weight classification of tensors we were able to give a set of necessary 
criteria for a space to be VSI. Then we constructed a set of spaces that had this property and,
 consequently, are VSI spaces. These spaces were all of the form of a pseudo-Riemannian Kundt spacetime 
 which is a generalisation of the Lorentzian Kundt metrics. We also discussed the related Walker metrics.

However, there are still a few open questions:
\begin{enumerate}
\item{} \emph{Do all VSI metrics possess the ${\bf N}^G$-property?}  A stronger question is:
\item{} \emph{Do all VSI metrics possess the ${\bf N}$-property?}  
\item{} \emph{Are all VSI metrics of pseudo-Riemannian Kundt form?} 
\end{enumerate}
In the Lorentzian case, the answer to these questions are `yes' (questions 1 and 2 are 
equivalent in Lorentzian case), but for other signatures there is the possibility for other types of VSI metrics. 

This paper therefore sets the stage for further investigations on VSI metrics of spaces of arbitrary signature.

\section*{Acknowledgements} 
The main part of this work was done during a visit to Dalhousie
University April-June 2010 by SH. The work was supported by
NSERC of Canada (AC) and by a Leiv Eirikson mobility grant from the  
Research Council of Norway, project no: {\bf 200910/V11} (SH). 
\appendix

\section{Sketch of the VSI proof}

Here we will give a sketch of a proof that the metrics in section \ref{arbVSI}  are indeed VSI. Therefore,
consider the metric given by eqn. (42). 
The frame is given by:
\beq
{\mbold\ell}^{I}&=&\d u^I, \\
{\mbold n}^{I}&=&\d v^I+\left[v^IH^{I,(1)}(u^{a},v^{a_I},x^i)+H^{I,(0)}(u^{a},v^{a_I},x^i)\right]\d u^I\nonumber \\
& & +\left[v^IW^{I,(1)}(u^a,v^{a_{I+1}},x^i)\d u^{I+1}+W^{I,(0)}_{\mu_I}(u^a,v^{a_I},x^i)\d x^{\mu_I}\right], \\
{\mbold m}^i&=& \d x^i.
\eeq
We will investigate the Ricci rotation coefficients, $\Gamma^{\mu}_{~\alpha\beta}$. By the Cartan structural equations, these can be deduced by the coefficients of the exterior derivative of the null-frame:
\[ \d {\mbold \omega}^\mu=-\Gamma^{\mu}_{~\alpha\beta}{\mbold\omega}^{\alpha}\wedge{\mbold\omega}^\beta.\] 
Let us use the above null-frame and check the boost weight of the connection coefficient 
$\Gamma^\mu_{~\alpha\beta}$. Recall that an index ${\mbold\ell}^I$ downstairs (or ${\mbold n}^I$ upstairs)
 has an associated $-1$ boost weight, while an index ${\mbold n}^I$ downstairs 
 (or ${\mbold \ell}^I$ upstairs) has an associated $+1$ boost weight. We will, in short, refer to the boost weight as a triple $(\ast,\ast,\ast)$ which corresponds to the indices $(a<I,a=I,a>I)$.

We note that $\d {\mbold \ell}^I=0$ and $\d {\mbold \omega}^i=0$. All the 
linearly independent connection coefficients are therefore obtained from $\d {\mbold n}^I$:
\beq
\d {\mbold n}^{I}&=&\underbrace{H^{I,(1)}(u^{a},v^{a_I},x^i)\d v^I\wedge {\mbold\ell}^I}_{(0,-1,0)+(0,-2,\ast)}\nonumber \\
&& +\underbrace{\left[v^I\d H^{I,(1)}(u^{a},v^{a_I},x^i)+\d H^{I,(0)}(u^{a},v^{a_I},x^i)\right]\wedge {\mbold\ell}^I}_{(-1,-2,0)+(0,-2,\ast)}\nonumber \\
&& +\underbrace{W^{I,(1)}(u^a,v^{a_{I+1}},x^i)\d v^I\wedge{\mbold \ell}^{I+1}}_{{(0,0,-1)+(0,-1,-1)+(0,-2,-1)}}\nonumber \\
&&+
\underbrace{v^I\d \left[ W^{I,(1)}(u^a,v^{a_{I+1}},x^i)\right]\wedge{\mbold\ell}^{I+1}}_{(0,-1,-1)}+\underbrace{\d \left[W^{I,(0)}_{\mu_I}(u^a,v^{a_I},x^i)\d x^{\mu_I}\right]}_{(0,-1,\ast)}\nonumber
\eeq
We see that all the coefficients fulfill the ${\bf N}$-property; consequently, 
\emph{the connection coefficients $\Gamma^\mu_{~\alpha\beta}$ satisfy the ${\bf N}$-property}. 

We can see from the previous equations that the Riemann tensor, $R$, also fulfills the ${\bf N}$-property. The covariant derivative of a tensor $T$ can, symbolically, be written as: 
\[ \nabla T=\partial T-\sum\Gamma\star T.\] 
The second term of the right-hand side is algebraic, so if $T$ and $\Gamma$ both fulfill the ${\bf N}$-property, so will the term $\sum\Gamma\star T$. Therefore, we only have to check the partial derivatives, $\partial T$. 

For the Riemann tensor, we can now use the Bianchi identity and the generalised Ricci identity 
to show that all covariant derivatives of the Riemann tensor  also fulfill the ${\bf N}$-property. Alternatively, since the dangerous terms that can \emph{potentially} make the derivatives violate the ${\bf N}$-property are the partial derivatives with respect to $v^I$, we can carefully keep track of the $v^I$-dependence of each of the components. We then notice that boost weight terms $(0,-1,\ast)$ will either be independent of $v^I$, or linear in $v^I$. However, the terms that are linear in $v^I$ are of boost weight $(0,-1,-1)$ and independent of $v^{I+1}$. Therefore,  $\partial T$ will still obey the ${\bf N}$-property.

\end{document}